\newcommand{\OPT}{\textsc{Opt}\xspace}
\newcommand{\ALG}{\textsc{Alg}\xspace}
\newcommand{\REORG}{\textsc{Comp}\xspace}
\newcommand{\calG}{{\cal G}}
\newcommand{\calR}{{\cal R}}
\newcommand{\calL}{{\cal L}}
\newcommand{\calC}{{\cal C}}
\newcommand{\vecc}{\mathbf{c}}
\newcommand{\vecx}{\mathbf{x}}
\newcommand{\vecy}{\mathbf{y}}
\newcommand{\vecz}{\mathbf{z}}
\newcommand{\vecw}{\mathbf{w}}
\newcommand{\vecg}{\mathbf{g}}
\newcommand{\vech}{\mathbf{h}}
\newcommand{\vecu}{\mathbf{u}}
\newcommand{\veczero}{\mathbf{0}}
\newcommand{\vecchat}{\mathbf{\hat{c}}}
\newcommand{\configs}{\calC}
\newcommand{\configsext}{\calC^\mathrm{ext}}
\newcommand{\Z}{\mathbb{Z}}
\newcommand{\e}{\mathrm{e}}
\newcommand{\elem}{\mathrm{nd}}
\begin{document}

\title{Improved Analysis of Online Balanced Clustering\thanks{Supported 
by Polish National Science Centre grant 2016/22/E/ST6/00499, 
by Center for Foundations of Modern Computer Science (Charles University project UNCE/SCI/004), 
by the project 19-27871X of GA ČR, and by NSF CAREER grant 1651861.}}

\author{Marcin Bienkowski\inst{1} \and
Martin Böhm\inst{1} \and
Martin Kouteck\'{y}\inst{2} \and
Thomas Rothvoß\inst{3} \and
Ji\v{r}\'{\i} Sgall\inst{2} \and
Pavel Vesel\'{y}\inst{2}}

\authorrunning{M.~Bienkowski, M.~Böhm, M.~Kouteck\'{y}, T.~Rothvoß, J.~Sgall, P.~Vesel\'{y}}

\institute{Institute of Computer Science, University of Wroc{\l}aw, Poland \and
Computer Science Institute of Charles University, Faculty of Mathematics and Physics, Prague, Czechia \and
University of Washington, Seattle, United States}

\maketitle

\begin{abstract}
In the online balanced graph repartitioning problem, one has to maintain a
clustering of $n$ nodes into $\ell$ clusters, each having $k = n / \ell$ nodes.
During runtime, an online algorithm is given a stream of communication requests
between pairs of nodes: an inter-cluster communication costs one unit, while the
intra-cluster communication is free. An algorithm can change the clustering,
paying unit cost for each moved node.

This natural problem admits a simple $O(\ell^2 \cdot k^2)$-competitive algorithm
\REORG, whose performance is far apart from the best known lower bound of
$\Omega(\ell \cdot k)$. One of open questions is whether the dependency on
$\ell$ can be made linear; this question is of practical importance as in the
typical datacenter application where virtual machines are clustered on physical
servers, $\ell$ is of several orders of magnitude larger than~$k$. We answer
this question affirmatively, proving that a simple modification of \REORG is
$(\ell \cdot 2^{O(k)})$-competitive.

On the technical level, we achieve our bound by translating the problem to a system
of linear integer equations and using Graver bases to show the existence of a
``small'' solution.

\keywords{Clustering \and Graph partitioning \and 
Balanced partitioning \and Graver basis \and 
Online algorithms \and Competitive analysis}
\end{abstract}


\section{Introduction}

We study the \emph{online balanced graph repartitioning problem}, introduced by
Avin et al.~\cite{AvLoPS16}. In this problem, an algorithm has to maintain a
time-varying partition of $n$ nodes into $\ell$ clusters, each having $k = n /
\ell$ nodes. An algorithm is given an online stream of communication requests,
each involving a pair of nodes. A~communication between a pair of nodes from the
same cluster is free, while inter-cluster communication incurs unit cost. In
response, an~algorithm may change the mapping of nodes to clusters, also paying
a unit cost for changing a~cluster of a single node. After remapping, each cluster
has to contain $k$ nodes again. 

We focus on an online scenario, where an (online) algorithm has to make
irrevocable remapping decisions after each communication request without the
knowledge of the future. The problem can be seen as a dynamic (and online)
counterpart of a so-called $\ell$-balanced graph partitioning
problem~\cite{AndRae06}, where the goal is to partition the graph into $\ell$
equal-size parts to minimize the total weight of edges in the cut. In
particular, $\ell = 2$ corresponds to the well-studied graph  bisection
problem~\cite{KraFei06,Raec08}. 

A main practical motivation originates from server virtualization in
datacenters. There, nodes correspond to $n$ virtual machines run on $\ell$
physical ones. Each physical machine (a~cluster) has the capacity for
accommodating $k$ virtual machines. Communication requests are induced by
distributed applications running in the datacenter. While communication within a
physical machine is practically free, the inter-cluster communication (between
different physical machines) generates considerable load and affects the overall
running time (see, e.g.,~\cite{ChZMJS11}). Due to the current capabilities of modern
architectures, migrating a virtual machine to another physical machine
(remapping of a node) is possible, but it incurs a~certain load.\footnote{In
the reality, the network load generated by a single request is much smaller than
the load of migrating a~whole virtual machine. This has been captured by some papers, 
which assigned cost $\alpha \geq 1$ to the latter event ($\alpha$ is
then a parameter of the problem). However, this additional difficulty can be
resolved by standard rent-or-buy approaches (reacting only to every $\alpha$-th
request between a given pair of nodes). Therefore, and also to keep the description
simple, in this paper, we assume that $\alpha = 1$.}

To evaluate the performance of an online algorithm \ALG, we use 
a standard notion of competitive ratio~\cite{BorEl-98} which is the 
supremum over all possible inputs of \ALG-to-\OPT cost, where \OPT denotes the optimum offline
solution for the problem.

\subsection{Component-Based Approach}
\label{sec:natural}

Our contribution is related to the following natural algorithm (henceforth
called \REORG) proposed by Avin et al.~\cite{AvLoPS16}. We define it in detail
in \autoref{sec:alg_def}; here, we give its informal description. \REORG operates
in phases, in each phase keeping track of components of nodes that communicated
in this phase (i.e., connected components of a graph whose edges are communication
requests). \REORG always keeps all nodes of a given component in a single
cluster; we call this property \emph{component invariant}. When components are
modified, to maintain the invariant, some nodes may have to change their clusters; 
the associated cost can be trivially bounded by $n$.
As this changes the mapping of nodes to clusters, we call it \emph{remapping event}.
If such remapping does not exist\footnote{Deciding whether such remapping exists
is NP-hard. As typical for online algorithms, however, our focus is on studying
the disadvantage of not knowing the future rather than on computational
complexity.}, the current phase terminates, and it is possible to show that \OPT
paid at least $1$ in this phase. The number of remapping events is equal to the
number of times connected components are modified, which can be upper-bounded by
$n-1$. Thus, the overall cost of \REORG in a~single phase is $O(n^2) = O(\ell^2
\cdot k^2)$, while that of \OPT is at least $1$. This shows that \REORG is
$O(\ell^2 \cdot k^2)$-competitive.

\subsection{Related Work}

Perhaps surprisingly, no better algorithm than \REORG (even randomized one) is
known for the general case. Some improvements were, however, given for specific
values of $k$ and $\ell$. 

The dependency of the competitive ratio on $k$ is at least linear for
deterministic algorithms: a lower bound of $\Omega(k)$ follows by the reduction
from online paging~\cite{SleTar85} and holds already for $\ell = 2$
clusters~\cite{AvBLPS20}. For $k = 2$, $O(1)$-competitive (deterministic)
algorithms are known: a 7-competitive algorithm was given by Avin et
al.~\cite{AvLoPS16,AvBLPS20} and was later improved to a~6-competitive one by
Pacut et al.~\cite{PaPaSc21}. However, already for $k \geq 3$, the competitive
ratio of any deterministic algorithm cannot be better than $\Omega(k \cdot
\ell)$~\cite{PaPaSc20,PaPaSc21}. For the special case of $k = 3$, Pacut et
al.~\cite{PaPaSc21} showed that a variant of \REORG is $O(\ell)$-competitive.
The lack of progress towards improving the $O(\ell^2 \cdot k^2)$ upper bound for
the general case motivated the research of simplified variants. 

Henzinger et
al.~\cite{HeNeSc19} initiated the study of a~so-called \emph{learning variant},
where there exists a fixed partitioning (unknown to an algorithm) of nodes into
clusters, and the communication requests are \emph{consistent} with this mapping,
i.e., all requests are given between same-cluster node pairs. Hence, the implicit goal 
of an~algorithm is to learn such static mapping. The deterministic lower
bound of $\Omega(k \cdot \ell)$ also holds for this variant, and, furthermore, there
exists a deterministic algorithm that asymptotically matches this
bound~\cite{PaPaSc20,PaPaSc21}. 

Another strand of research focused on a resource-augmented scenario, where 
each cluster of an online algorithm is able to accommodate $(1 + \epsilon) \cdot k$
nodes (but the online algorithm is still compared to \OPT, whose clusters have
to keep $k$ nodes each). For $\epsilon > 1$, the first deterministic algorithm
was given by Avin et al.~\cite{AvBLPS20}; the achieved ratio was $O(k \cdot \log
k)$ (for an arbitrary $\ell$). Suprisingly, the ratio remains $\Omega(k)$ even 
for large $\epsilon$ (as long as the algorithm cannot keep all nodes in a single 
cluster)~\cite{AvBLPS20}.

When these two simplifications are combined (i.e., the learning variant is
studied in a resource-augmented scenario), asymptotically optimal results are 
due to Henzinger et al.~\cite{HeNeSc19,HeNeRS21}. They show that for any fixed
$\epsilon > 0$, the deterministic ratio is $\Theta(\ell \cdot \log
k)$~\cite{HeNeSc19,HeNeRS21} and the randomized ratio is $\Theta(\log \ell + \log
k)$~\cite{HeNeRS21}. Furthermore, for $\epsilon > 1$, their deterministic
algorithm is $O(\log k)$-competitive~\cite{HeNeRS21}.

\subsection{Our Contribution}

We focus on the general variant of the balanced graph repartitioning problem. We
study a variant of \REORG (see \autoref{sec:natural})
in which \emph{each remapping
event is handled in a way minimizing the number of affected clusters}. 

We show
that the number of nodes remapped this way is $2^{O(k)}$ (in comparison to the
trivial bound of $n = \ell \cdot k$). The resulting bound on the competitive
ratio is then $(\ell \cdot k) \cdot 2^{O(k)} = \ell \cdot 2^{O(k)}$, i.e., we
replaced the quadratic dependency on $\ell$ in the competitive ratio by the
linear one. We note that the resulting algorithm retains the $O(\ell^2 \cdot
k^2)$-competitiveness guarantee of the original \REORG algorithm as well.

Given the lower bound of $\Omega(\ell \cdot k)$~\cite{PaPaSc20,PaPaSc21}, the
resulting strategy is optimal for a constant~$k$. We also note that for the
datacenter application described earlier, $k$ is of several orders of magnitude
smaller than $\ell$.

We achieve our bound by translating the remapping event to a system of
linear integer equations so that the size of the solution (sum of
values of variables) is directly related to the number of affected
clusters. Then, we use algebraic tools such as Graver bases to argue
that these equations admit a ``small'' solution.\footnote{
One could also bound the size of a solution along the lines of 
Schrijver~\cite[Corollary 17.1b]{Schrij98}, which boils down to a determinant bound, same as our proof. In
general, Graver basis elements may be much smaller, but in out specific case,
the resulting bound would be asymptotically the same.}


\section{Preliminaries}

An offline part of the input is a set of $n = \ell \cdot k$ nodes and
an initial valid mapping of these nodes into $\ell$ clusters. We call
a mapping \emph{valid} if each cluster contains exactly $k$ nodes.

An online part of the input is a set of requests, each being a pair of nodes
$(u,v)$. The request incurs cost $1$ if $u$ and $v$ are mapped to different
clusters. After each request, an~online algorithm may modify the current node
mapping to a~new valid one, paying $1$ for each node that changes its cluster.

For an input $I$ and an algorithm $\ALG$, $\ALG(I)$ denotes its cost
on input~$I$, whereas $\OPT(I)$ denotes the optimal cost of an offline
solution. \ALG is $\gamma$-competitive if there exists a~constant
$\beta$ such that for any input~$I$, it holds that $\ALG(I) \leq
\gamma \cdot \OPT(I) + \beta$. While $\beta$ has to be independent 
of the online part of the sequence, it may depend on the offline 
part, i.e., be a function of parameters~$\ell$ and $k$.


\section{Better Analysis for COMP}
\label{sec:alg_def}

Algorithm \REORG~\cite{AvLoPS16} splits input into phases.  
In each phase, \REORG maintains an auxiliary partition $\calR$ of the set of
nodes into \emph{components}; initially, each node is in its own
singleton component. \REORG maintains the following \emph{component invariant}: 
for each component $S \in \calR$, all its nodes are inside the same cluster.

Assume a request $(u,v)$ arrives. Two cases are possible. 

\begin{itemize}
\item If $u$ and $v$ are within the same component of $\calR$, then by the
component invariant, they are in the same cluster. \REORG serves this
request without paying anything, without changing $\calR$, and without remapping nodes.
\item 
If $u$ and $v$ are in different components $S_a$ and $S_b$, \REORG
merges these components into $S_{ab} = S_a \uplus S_b$ by removing $S_a$ and $S_b$ from $\calR$, and
adding $S_{ab}$ to $\calR$. (By $A \uplus B$ we denote the \emph{disjoint union} of sets $A$
and $B$.)

If components $S_a$ and $S_b$ were in different clusters, the resulting
component~$S_{ab}$ now spans two clusters, which violates the component
invariant. To restore it, \REORG verifies whether there exists a valid mapping
of nodes into clusters preserving the component invariant (also for the new
component $S_{ab}$). In such case, a \emph{remapping event} occurs: among all
such mappings, \REORG chooses one minimizing the total number of affected
clusters. (The original variant of \REORG~\cite{AvLoPS16} simply chose any such mapping.) If,
however, no such valid mapping exists, \REORG resets $\calR$ to the initial
partition in which each node is in its own component and starts a new phase. 
\end{itemize}

\begin{lemma}
\label{lem:remapping_to_ratio}
Assume each remapping event affects at most $f(\ell, k)$ clusters (for some
function~$f$). Then \REORG is $O(\ell \cdot k^2 \cdot f(\ell, k))$-competitive.
\end{lemma}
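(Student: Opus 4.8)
The plan is to bound the cost of \REORG in a single phase against a lower bound of $1$ on the cost of \OPT in the same phase, and then sum over all phases. Throughout, fix one phase and let me account separately for the three sources of \REORG's cost: (i) serving requests without remapping, (ii) the remapping events, and (iii) the final reset at the end of a phase.

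\medskip

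First I would handle the request-serving cost. Every request that merges two components of $\calR$ either triggers a remapping event or ends the phase; a request within a single component costs nothing. Since $\calR$ starts as $n$ singletons and merges reduce the number of components by one, there are at most $n-1 = \ell k - 1$ merging requests in a phase, hence at most $\ell k - 1$ remapping events, and the total charge of type (i) in a phase is at most $\ell k - 1$. Next, each remapping event affects at most $f(\ell,k)$ clusters by assumption; since each affected cluster holds $k$ nodes, at most $k \cdot f(\ell,k)$ nodes are remapped per event, so type (ii) costs at most $(\ell k - 1) \cdot k \cdot f(\ell,k) = O(\ell \cdot k^2 \cdot f(\ell,k))$ per phase. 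The reset (iii) costs at most $n = \ell k$, since in the worst case every node moves back. Adding these, the cost of \REORG in a single phase is $O(\ell \cdot k^2 \cdot f(\ell,k))$ (the dominating term being (ii), as $f(\ell,k) \ge 1$).

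\medskip

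The key remaining step is the lower bound on \OPT. When a phase terminates by a reset, it is because after some merge creating a component $S_{ab}$, there is \emph{no} valid mapping of the current components into clusters respecting the component invariant. I claim \OPT must pay at least $1$ during this phase. Consider the communication graph $G$ of the phase (nodes, with an edge for each request issued in the phase): its connected components are exactly the components in $\calR$ at the moment of the reset. If \OPT paid $0$ on every request in the phase, then at the time of each request $(u,v)$ the endpoints $u,v$ lay in the same cluster of \OPT's mapping, and since \OPT is free to keep that mapping fixed (paying $0$ also means no migrations help it avoid anything — more carefully, one argues \OPT can be assumed to hold a single fixed valid mapping throughout a zero-cost phase, or else it already paid). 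That fixed valid mapping assigns all of each connected component of $G$ to one cluster, i.e., it is precisely a valid component-invariant mapping for $\calR$ at reset time — contradicting the fact that none exists. Hence \OPT pays at least $1$ per phase.

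\medskip

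Finally I would assemble the competitive bound. The first phase's cost includes the additive $n = \ell k$ term for the initial configuration, which is absorbed into the additive constant $\beta = \beta(\ell,k)$ permitted by the definition. For every phase that is not the last one, \OPT pays at least $1$, so if there are $p$ completed phases then $\OPT(I) \ge p$ (ignoring the possibly-incomplete last phase whose bounded cost goes into $\beta$), while $\ALG(I) \le (p+1)\cdot O(\ell k^2 f(\ell,k))$. Therefore $\ALG(I) \le O(\ell k^2 f(\ell,k)) \cdot \OPT(I) + \beta$, which is the claimed $O(\ell \cdot k^2 \cdot f(\ell,k))$-competitiveness. The main obstacle is the \OPT lower bound: one must argue carefully that in a zero-cost phase \OPT's mapping can be taken to be a single fixed valid mapping (handling the case where \OPT migrates nodes for free — which is impossible — or migrates them while paying, which already gives the bound), so that its existence contradicts the reset condition.
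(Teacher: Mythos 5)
Your proof is correct and follows essentially the same route as the paper: bound \REORG's per-phase cost by $O(\ell\cdot k^2\cdot f(\ell,k))$ (at most $\ell k-1$ merges, each costing $1$ for the request plus at most $k\cdot f(\ell,k)$ for remapping), and lower-bound \OPT by $1$ per completed phase via the dichotomy that \OPT either migrates a node or keeps a fixed mapping that would then respect the component invariant at reset time, a contradiction. The only cosmetic difference is your extra additive $n$ charge for the reset, which is unnecessary (resetting $\calR$ moves no nodes) but harmless to the bound.
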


\begin{proof}
Fix any input $I$ and split it into phases according to \REORG. Each phase (except possibly the last
one) terminates because there is no valid node mapping that would preserve the
component invariant. That is, for any fixed valid mapping of nodes to clusters,
the phase contains an~inter-cluster request. Thus, during any phase, either $\OPT$ pays
at least $1$ for node remapping, or its mapping is fixed within phase, and then it pays
at least $1$ for serving requests.

On the other hand, within each phase, \REORG modifies family $\calR$ of
components at most $n-1 = \ell \cdot k - 1$ times since the number of components
decreases by one with each modification. Each time it happens, it pays
$1$ for the request, and then, if the remapping event is triggered, it pays
additionally at most $k \cdot f(\ell, k)$ as it remaps at most $k$ nodes from each affected cluster.
Hence, the overall cost in a single phase is $(\ell \cdot k - 1)
\cdot (1 + k \cdot f(\ell, k)) = O(\ell \cdot k^2 \cdot f(\ell, k))$. 

The cost of \REORG in the last phase is universally bounded by $O(\ell \cdot k^2 \cdot f(\ell, k))$ 
and in the remaining phases, the \REORG-to-\OPT cost ratio is at most $O(\ell \cdot k^2 \cdot
f(\ell, k))$, which concludes the proof.
\qed
\end{proof}

The trivial upper bound on $f(\ell, k)$ is $\ell$, which together with
\autoref{lem:remapping_to_ratio} yields the already known bound of $O(\ell^2 \cdot
k^2)$~\cite{AvLoPS16}. In the remaining part, we show that if \REORG{} tries to
minimize the number of affected clusters for each remapping event, then $f(\ell,
k)$ can be upper-bounded by $2^{O(k)}$. Thus, our analysis beats the simple
approach when the number of clusters~$\ell$ is much larger than the cluster
capacity $k$. This ratio is also optimal for constant $k$ as the lower bound on
the competitive ratio is $\Omega(\ell \cdot k)$~\cite{PaPaSc20,PaPaSc21}.

\subsection{Analyzing a Remapping Event}

Recall that we want to analyze a remapping event, where we have a given valid
mapping of nodes to clusters and a family of components $\calR$ satisfying the
component invariant. \REORG merges two components $S_a$ and $S_b$ into one, and the
resulting component $S_{ab} = S_a \uplus S_b$ spans two clusters. As we assume
that it is possible to remap nodes to satisfy the component invariant, the size
of $S_{ab}$ is at most $k$.

We first express our setup in the form of a system of linear integer equations.
The definition below assumes that component invariant holds, i.e., each
component is entirely contained in some cluster.

\begin{definition}[Cluster configuration]
A \emph{configuration} of a cluster $C$ is a vector
$\vecc = (c_1, \dots, c_k)$ where $c_i \geq 0$ is the number of
components of size $i$ in $C$.
\end{definition}

We denote the set of all possible cluster configurations by $\calC$, i.e.,
$\calC$ contains all possible $k$-dimensional vectors $\vecc = (c_1, \dots,
c_k)$ such that $\sum_{i=1}^k i \cdot c_i = k$. 
For succinctness, we use $\elem(\vecc) = \sum_{i=1}^k i \cdot c_i$.
The number of
configurations is equal to the partition function $\pi(k)$, which
denotes the number of possibilities how $k$ can be expressed as a sum
of a multiset of non-negative integers. From the known bounds on $\pi$, we
get $|\calC|=\pi(k) \leq 2^{O(\sqrt{k})}$~\cite{Erdos42}.

We take two clusters that contained components $S_a$ and $S_b$, respectively,
and we virtually replace them by a~\emph{pseudo-cluster} $\hat{C}$ that contains
all their components (including $S_{ab}$ and excluding $S_a$ and $S_b$). Let $\vecchat$ be the
configuration of this pseudo-cluster; note that 
$\elem(\vecchat) = 2 k$, and hence $\vecchat \notin \configs$. We define the extended set of
configurations $\configsext = \configs \uplus \{ \vecchat \}$.

Let $\vecx$ be a $|\configsext|$-dimensional vector, indexed by possible configurations
from $\configsext$, such that, for any configuration $\vecc \in \configsext$,
$x_\mathbf{c}$ is the number of clusters with configuration~$\mathbf{c}$    
before remapping event. Note that $\vecx \geq \veczero$, 
$x_\vecchat = 1$ and $\|\vecx\|_1 = \sum_{\vecc \in \configsext} x_\vecc = \ell - 1$.
Let 
\begin{equation}
\label{eq:u_def}
    \vecu = \sum_{\vecc \in \configsext} x_\vecc \cdot \vecc\,.
\end{equation}
That is, $\vecu = (u_1, u_2, \ldots, u_k)$, where $u_i$ is the total number of components 
of size~$i$ (in all clusters). Clearly, $\elem(\vecu) = (\ell-2) \cdot k + 2 \cdot k = \ell \cdot k$.
We rewrite~\eqref{eq:u_def} as 
\begin{equation}
\label{eq:u_def_matrix}
    \vecu = A \vecx\,,
\end{equation}
where $A$ is the matrix with $k$ rows and $|\configsext|$ columns. Its columns are equal to
vectors of configurations from $\configsext$.

As $\vecx$ describes the current state of the clusters, in the following, we focus on
finding an appropriate vector $\vecy$ describing a \emph{target} state of the clusters,
i.e., their state after the remapping event takes place. 

\begin{definition}
An integer vector $\vecy$ is a \emph{valid target vector} if it is 
$|\configsext|$-di\-men\-sion\-al and satisfies $\vecy \geq 0$, 
$A \vecy = \vecu$ and $y_\vecchat = 0$.
\end{definition}

\begin{lemma}
\label{obs:valid_target_ell}
For any valid target vector $\vecy$, it holds that $\|\vecy\|_1 = \ell$.
\end{lemma}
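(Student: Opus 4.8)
The plan is to compute $\|\vecy\|_1$ directly from the defining properties of a valid target vector. The key observation is that every configuration $\vecc \in \configs$ satisfies $\elem(\vecc) = k$ by definition of $\configs$, and since $y_\vecchat = 0$, the vector $\vecy$ is supported entirely on $\configs$. Hence if I apply the linear functional $\elem(\cdot)$ to both sides of $A\vecy = \vecu$, the left-hand side becomes $\sum_{\vecc \in \configs} y_\vecc \cdot \elem(\vecc) = k \cdot \sum_{\vecc \in \configs} y_\vecc = k \cdot \|\vecy\|_1$, using $\vecy \geq 0$ so that $\|\vecy\|_1 = \sum_\vecc y_\vecc$.

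For the right-hand side, I need $\elem(\vecu) = \sum_{i=1}^k i \cdot u_i$. This was already computed in the excerpt: since $\vecu = A\vecx$ counts the total number of components of each size over all $\ell-1$ clusters/pseudo-cluster, and these components partition all $\ell \cdot k$ nodes (the $\ell - 2$ untouched clusters contribute $(\ell-2)\cdot k$ nodes and the pseudo-cluster $\hat{C}$ contributes $2k$), we get $\elem(\vecu) = \ell \cdot k$. Equivalently, one can observe $\elem$ applied to $A\vecx$ gives $\sum_{\vecc \in \configsext} x_\vecc \cdot \elem(\vecc) = (\ell-2)\cdot k + 1 \cdot 2k = \ell k$, using $\|\vecx\|_1 = \ell - 1$, $x_\vecchat = 1$, and $\elem(\vecchat) = 2k$.

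Combining the two sides gives $k \cdot \|\vecy\|_1 = \ell \cdot k$, and dividing by $k$ yields $\|\vecy\|_1 = \ell$.

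This proof is essentially a one-line counting argument, so I do not anticipate any genuine obstacle; the only thing to be careful about is making explicit that $y_\vecchat = 0$ is what lets me restrict the sum to $\configs$ (where $\elem \equiv k$), and that $\vecy \geq 0$ is what makes $\|\vecy\|_1$ equal to the ordinary coordinate sum rather than a sum of absolute values. Both facts are part of the definition of a valid target vector, so they are available.
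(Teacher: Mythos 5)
Your proof is correct and is essentially the paper's argument in slightly different clothing: the paper decomposes $\vecy$ into $\|\vecy\|_1$ unit vectors and applies $\elem$ to each column $A\vecy^i$, which is exactly your computation $\elem(A\vecy)=\sum_{\vecc\in\configs} y_\vecc\cdot\elem(\vecc)=k\cdot\|\vecy\|_1$ by linearity, combined with the already established fact $\elem(\vecu)=\ell\cdot k$. Your explicit remarks on where $y_\vecchat=0$ and $\vecy\geq\veczero$ enter are accurate and match the roles these conditions play in the paper.
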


\begin{proof}
Let $\ell' = \|\vecy\|_1$. Then $\vecy = \sum_{i=1}^{\ell'} 
\vecy^i$, where $\vecy^i$ is equal to $1$ for some configuration $\vecc \neq \vecchat$ 
and $0$ everywhere else. 
As $\vecu = A \vecy = \sum_{i=1}^{\ell'} A \vecy^i$, we obtain
$\ell \cdot k = \elem(\vecu) = \sum_{i=1}^{\ell'} \elem(A \vecy^i)$. 
For any $i$, vector $A \vecy^i$ is a single column of $A$ corresponding to 
a configuration $\vecc \neq \vecchat$, and thus $\elem(A \vecy^i) = k$.
This implies that $\ell' = \ell$, which concludes the proof.
\qed
\end{proof}

\begin{lemma}
\label{obs:solution_exists}
There exist a valid target vector $\vecy$.
\end{lemma}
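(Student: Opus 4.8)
The plan is to extract the desired vector directly from the combinatorial assumption that triggers a remapping event in the first place. Recall that \REORG performs a remapping event (rather than terminating the phase) precisely when there exists a valid mapping of nodes to clusters preserving the component invariant for the \emph{new} family $\calR$, i.e., the one containing $S_{ab}$ in place of $S_a$ and $S_b$. Fix one such mapping $M$. Since $M$ respects the component invariant, it assigns each component of $\calR$ entirely to one cluster, and since $M$ is valid, each of the $\ell$ clusters receives exactly $k$ nodes.

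Next I would read off a vector $\vecy$ from $M$: for every genuine configuration $\vecc \in \configs$ let $y_\vecc$ be the number of clusters of $M$ whose configuration equals $\vecc$, and set $y_\vecchat = 0$. Then $\vecy$ is integral with $\vecy \geq \veczero$ and $y_\vecchat = 0$ by construction (every cluster of $M$ holds $k$ nodes, so its configuration lies in $\configs$ and can never be $\vecchat$, for which $\elem(\vecchat) = 2k$). It remains to check $A\vecy = \vecu$. Unfolding the definition of $A$, the $i$-th coordinate of $A\vecy = \sum_{\vecc} y_\vecc \cdot \vecc$ counts the total number of size-$i$ components over all clusters of $M$. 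Because $M$ places every component of $\calR$ into exactly one cluster, without splitting or merging, this total is just the number of size-$i$ components in $\calR$ itself; and by the definitions of $\vecx$ and $\vecchat$ (the pseudo-cluster $\hat C$ already carries $S_{ab}$ and not $S_a,S_b$), the vector $\vecu = A\vecx$ is exactly the multiset of component sizes of the post-merge family $\calR$. Hence $A\vecy = \vecu$, so $\vecy$ is a valid target vector.

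I do not expect a real obstacle here: the statement is essentially a restatement of the hypothesis of the remapping event in the language of configurations. The only point requiring care is the bookkeeping in the last step — making sure that $\vecu$, which was defined via the pre-remapping counts $\vecx$ together with the special column $\vecchat$, genuinely equals the component-size profile of the family $\calR$ \emph{after} the merge of $S_a$ and $S_b$ into $S_{ab}$. Once that identification is made, the three defining properties of a valid target vector are immediate, and combining this with \autoref{obs:valid_target_ell} even tells us that any such $\vecy$ has $\|\vecy\|_1 = \ell$, as one should expect.
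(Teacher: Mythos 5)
Your proposal is correct and follows essentially the same route as the paper: both take the valid component-respecting mapping whose existence triggers the remapping event and define $y_\vecc$ as the number of clusters with configuration $\vecc$ under that mapping (with $y_\vecchat = 0$). Your explicit verification that $A\vecy = \vecu$ via the post-merge component-size counts is just a spelled-out version of what the paper leaves implicit.
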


\begin{proof}
After the remapping event takes place, it is possible to map nodes to different
clusters so that the component invariant holds (after merging $S_a$ and $S_b$
into $S_{ab}$), i.e., each component is entirely contained in some cluster (not
in the pseudo-cluster). Thus, each cluster has a~well-defined configuration in
$\configs$, and $y_\vecc$ is simply the number of clusters with
configuration~$\vecc$ after remapping. 
\qed
\end{proof}

\begin{lemma}
\label{lem:solution_cost}
Fix a valid target vector $\vecy$.
Then, there exists a node remapping that affects 
$(1/2) \cdot \|\vecx - \vecy \|_1 + 1/2$ clusters.
\end{lemma}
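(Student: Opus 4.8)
The plan is to construct the remapping explicitly, freezing as many clusters as possible and redistributing the components of the rest. First I would set
$P := \sum_{\vecc \in \configsext,\ x_\vecc > y_\vecc}(x_\vecc - y_\vecc)$,
the positive part of $\vecx - \vecy$, and note that the pseudo-cluster contributes to it, since $x_\vecchat = 1 > 0 = y_\vecchat$. Since $\vecx,\vecy \geq \veczero$, $\|\vecx\|_1 = \ell - 1$ and $\|\vecy\|_1 = \ell$ (\autoref{obs:valid_target_ell}), we get $\sum_\vecc (x_\vecc - y_\vecc) = -1$, so the negative part of $\vecx - \vecy$ equals $P + 1$ and $\|\vecx - \vecy\|_1 = 2P + 1$. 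Thus $(1/2)\|\vecx - \vecy\|_1 + 1/2 = P + 1$, and it suffices to exhibit a valid remapping touching at most $P + 1$ clusters.

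Next I would decide which clusters stay frozen: for each configuration $\vecc \neq \vecchat$, pick $\min(x_\vecc, y_\vecc)$ of the $x_\vecc$ current clusters whose configuration is $\vecc$, and leave all their nodes (hence all their components) in place. Distinct configurations correspond to distinct clusters, so the number of frozen clusters is $\sum_{\vecc \neq \vecchat}\min(x_\vecc, y_\vecc)$. Using $\sum_{\vecc \neq \vecchat} x_\vecc = \ell - 2$ and the fact that the part of the sum defining $P$ coming from configurations $\vecc \neq \vecchat$ equals $P - 1$, this number is $(\ell - 2) - (P - 1) = \ell - P - 1$. The clusters that are not frozen are therefore the two clusters represented by the pseudo-cluster $\hat C$ together with $P - 1$ others, i.e.\ $P + 1$ clusters in total, and these are the only ones we modify.

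It then remains to check that the components currently held by these $P + 1$ clusters can be re-partitioned into $P + 1$ clusters of exactly $k$ nodes whose configurations are precisely the target configurations not yet created, namely $y_\vecc - \min(x_\vecc, y_\vecc)$ clusters of configuration $\vecc$ for each $\vecc \neq \vecchat$. The number of these target configurations is $\sum_{\vecc \neq \vecchat}\bigl(y_\vecc - \min(x_\vecc, y_\vecc)\bigr) = \ell - (\ell - P - 1) = P + 1$, which matches. For feasibility I would compare size profiles: the components available for re-partitioning have size profile $\vecu - \sum_{\vecc \neq \vecchat}\min(x_\vecc, y_\vecc)\cdot \vecc$, whereas the target configurations to be produced demand size profile $\sum_{\vecc \neq \vecchat}\bigl(y_\vecc - \min(x_\vecc, y_\vecc)\bigr)\cdot\vecc = A\vecy - \sum_{\vecc \neq \vecchat}\min(x_\vecc, y_\vecc)\cdot\vecc$, and these coincide because $A\vecy = \vecu$ and $y_\vecchat = 0$. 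Hence for every size $i$ the number of available components of size $i$ equals the number of size-$i$ slots among the target configurations, so any size-preserving bijection between components and slots yields the required partition; no component is split, and since each $\vecc \in \configs$ satisfies $\elem(\vecc) = k$, each new cluster holds exactly $k$ nodes. Installing these $P + 1$ configurations on the $P + 1$ non-frozen clusters in any order is then a valid remapping (the component invariant is preserved and every cluster is balanced) touching at most $P + 1 = (1/2)\|\vecx - \vecy\|_1 + 1/2$ clusters.

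The construction is mostly bookkeeping; the main point to handle carefully is that a configuration records only the multiset of component sizes of a cluster, not the components themselves, so the redistribution step must be justified purely by the equality of size profiles — which is exactly where $A\vecy = \vecu$ enters — rather than by any literal correspondence between configurations before and after. The second easy-to-miss point, responsible for the additive $+1$, is that the single entry $\vecchat$ of $\vecx$ stands for two actual clusters, both of which get modified.
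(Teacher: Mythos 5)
Your proof is correct and follows essentially the same route as the paper: freeze $\min\{x_\vecc,y_\vecc\}$ clusters per configuration, remap only the rest, and count the untouched clusters via the positive/negative parts of $\vecx-\vecy$ (the paper phrases this with an auxiliary vector $\tilde{\vecx}$ having $\tilde{x}_\vecchat=2$ for the two clusters behind the pseudo-cluster, which is exactly your ``$+1$'' bookkeeping). Your explicit size-profile argument for the redistribution just spells out the step the paper dismisses as a straightforward remapping, so there is nothing to fix.
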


\begin{proof}
We define vector $\tilde{\vecx}$, such that $\tilde{x}_\vecc = x_\vecc$ for
$\vecc \neq \vecchat$, and $\tilde{x}_\vecchat = 2$. Hence $\|\tilde{\vecx}\|_1
= \ell$. By \autoref{obs:valid_target_ell}, $\|\vecy\|_1 = \ell$ as well.
For $\vecc \in \configs$, the value of $\tilde{x}_\vecc$
denotes how many clusters have configuration $\vecc$, with $\tilde{x}_\vecchat =
2$ simply denoting that there are two clusters whose configuration is not equal to
any configuration from $\configs$.

Now, for any configuration $\vecc \in \configsext$, we fix $\min\{\tilde{x}_\vecc,
y_\vecc\}$ clusters with configuration $\vecc$. Our remapping does not touch
these clusters and there exists a straightforward node remapping which involves
only the remaining clusters. Their number is equal to 
\begin{align*}
(1/2) \cdot \sum_{\vecc \in \configsext} |\tilde{x}_\vecc - y_\vecc| 
& = 1 + (1/2) \cdot \sum_{\vecc \in \configs} |\tilde{x}_\vecc - y_\vecc| 
= 1 + (1/2) \cdot \sum_{\vecc \in \configs} |x_\vecc - y_\vecc| \\
& = (1/2) + (1/2) \cdot \sum_{\vecc \in \configsext} |x_\vecc - y_\vecc| \\
& = (1/2) + (1/2) \cdot \|\vecx - \vecy \|_1\,,
\end{align*}
and thus the lemma follows.
\qed
\end{proof}

We note that a valid target vector $\vecy$ guaranteed by \autoref{obs:solution_exists} may be 
completely different from vector $\vecx$ describing the current clustering,
and thus it is possible that $\|\vecx - \vecy \|_1 = \Omega(\ell)$. We however show that 
on the basis of $\vecy$, we may find a valid target vector $\vecy'$,
such that $\|\vecx - \vecy' \|_1$ is small, i.e., at most $2^{O(k)}$.

\subsection{Using Graver Basis}

\autoref{obs:solution_exists} guarantees the existence of vector $\vecz = \vecx
- \vecy$, encoding the reorganization of the clusters. We already know that $A
\vecz = \veczero$ must hold, but there are other useful properties as well; for
instance, if $z_\vecc > 0$ for a configuration $\vecc$, then $z_\vecc \leq
x_\vecc$ (i.e. the corresponding reorganization does not try to remove more
clusters of configuration $\vecc$ than $x_\vecc$). Our goal is to find another
vector $\vecw$ that also encodes the reorganization and $\|\vecw \|_1$ is small.

The necessary condition for $\vecw$ is that it satisfies $A \vecw = \veczero$, and thus we 
study properties of matrix $A$, defined by \eqref{eq:u_def_matrix}, 
in particular its Graver basis. For an introduction to Graver bases,
we refer the interested reader to a book by Onn~\cite{Onn10}.

\begin{definition}[Sign-compatibility]
Given two vectors $\mathbf{a}$ and $\mathbf{b}$ of the same length, we say that
they are \emph{sign-compatible} if for each coordinate $i$ the sign of~$a_i$ is
the same as the sign of $b_i$ (i.e., $a_i \cdot b_i \geq 0$). We say that
$\mathbf{a}\sqsubseteq \mathbf{b}$ if $\mathbf{a}$ and~$\mathbf{b}$ are
sign-compatible and $|a_i|\leq |b_i|$ for every coordinate $i$. Note that
$\sqsubseteq$ imposes a~partial order.
\end{definition}

\begin{definition}[Graver basis]
Given an integer matrix $A$, its \emph{Graver basis} $\calG(A)$ is the set of
$\sqsubseteq$-minimal elements of the \emph{lattice} $\calL^*(A) = \{\vech
\,|\, A \vech = \mathbf{0}, \vech \in \Z^n, \vech \neq \veczero \}$. 
\end{definition}

\begin{lemma}[Lemma 3.2 of \cite{Onn10}]
\label{lem:graver_decomposition}
Any vector $\mathbf{h} \in \calL^*(A)$ is a sign-compatible sum $\mathbf{h} =
\sum_i \mathbf{g}^i$ of Graver basis elements $\mathbf{g}^i\in \calG(A)$, with
some elements possibly appearing with repetitions.
\end{lemma}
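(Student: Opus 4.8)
The plan is to argue by induction on $\|\vech\|_1$, peeling off one Graver basis element at a time. The key reduction is this: it suffices to exhibit a single element $\vecg \in \calG(A)$ with $\veczero \neq \vecg \sqsubseteq \vech$. Indeed, given such a $\vecg$, the vector $\vech - \vecg$ again lies in $\ker A$ (since $A\vecg = \veczero$) and is sign-compatible with $\vech$: because $\vecg \sqsubseteq \vech$, for every coordinate $i$ we have $g_i h_i \ge 0$ and $|g_i| \le |h_i|$, hence $h_i - g_i$ carries the same sign as $h_i$ (or vanishes) and moreover $|h_i - g_i| = |h_i| - |g_i|$. Summing the latter over $i$ gives $\|\vech - \vecg\|_1 = \|\vech\|_1 - \|\vecg\|_1 < \|\vech\|_1$. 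If $\vech - \vecg = \veczero$ we are done (the one-term sum $\vech = \vecg$). Otherwise $\vech - \vecg \in \calL^*(A)$ with strictly smaller $\ell_1$-norm, so by the induction hypothesis it is a sign-compatible sum of Graver basis elements; each summand, being sign-compatible with $\vech - \vecg$ and hence with $\vech$ (as $\vech - \vecg \sqsubseteq \vech$), together with $\vecg$ gives the desired decomposition of $\vech$. The base case $\|\vech\|_1 = 1$ is immediate, as then $\vech$ itself is $\sqsubseteq$-minimal, so $\vech \in \calG(A)$.

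It remains to produce the element $\vecg$. I would consider the set $B = \{ \vecv \in \calL^*(A) : \vecv \sqsubseteq \vech \}$. It is nonempty, since $\vech \in B$, and it is finite, since every $\vecv \in B$ satisfies $|v_i| \le |h_i|$ for all $i$ and hence lies in a bounded box of integer points. A finite nonempty set partially ordered by $\sqsubseteq$ has a $\sqsubseteq$-minimal element $\vecg$. I then claim $\vecg$ is $\sqsubseteq$-minimal not merely within $B$ but within all of $\calL^*(A)$: if $\vecv \in \calL^*(A)$ and $\vecv \sqsubseteq \vecg$, then by transitivity of $\sqsubseteq$ also $\vecv \sqsubseteq \vech$, so $\vecv \in B$, and minimality of $\vecg$ in $B$ forces $\vecv = \vecg$. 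Thus $\vecg \in \calG(A)$ by definition of the Graver basis, and $\veczero \neq \vecg \sqsubseteq \vech$ as needed.

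The only step requiring genuine care is the norm identity $\|\vech - \vecg\|_1 = \|\vech\|_1 - \|\vecg\|_1$: it is exactly what makes the induction terminate (the $\ell_1$-norm strictly decreases) and, implicitly, what bounds the number of summands. It holds precisely because of sign-compatibility, and it would fail for an arbitrary $\vecg \sqsubseteq$-unrelated to $\vech$. Everything else is bookkeeping with the definitions of $\sqsubseteq$ and $\calL^*(A)$, so I expect no further obstacle.
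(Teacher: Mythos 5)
This lemma is not proved in the paper at all — it is imported verbatim as Lemma 3.2 of Onn's book — so there is no internal proof to compare against; your argument is a correct, self-contained reconstruction of the standard proof (essentially the one in Onn): induct on $\|\vech\|_1$, extract a nonzero $\sqsubseteq$-minimal element $\vecg$ of the finite set $\{\vecv \in \calL^*(A) : \vecv \sqsubseteq \vech\}$, observe it is $\sqsubseteq$-minimal in all of $\calL^*(A)$ and hence lies in $\calG(A)$, and peel it off using $\|\vech-\vecg\|_1 = \|\vech\|_1 - \|\vecg\|_1$. The only step worth making explicit is that the summands $\vecg^i$ produced by the induction hypothesis satisfy $\vecg^i \sqsubseteq \vech - \vecg$ (a sign-compatible sum has no cancellation, so $|g^i_j| \leq |h_j - g_j|$ for every coordinate), which is what lets transitivity of $\sqsubseteq$ carry sign-compatibility from $\vech - \vecg$ up to $\vech$; with that noted, the proof is complete.
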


Having the tools above, we may now prove the existence of a remapping involving 
small number of clusters.

\begin{lemma}
\label{lem:better_solution}
If there exists a valid target vector $\vecy$, then 
there exists a valid target vector~$\vecy'$, such that $\vecx - \vecy' \in \calG(A)$.
\end{lemma}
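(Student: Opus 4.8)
The plan is to start from the vector $\vecz = \vecx - \vecy \in \calL^*(A)$ guaranteed by \autoref{obs:solution_exists} (note $\vecz \neq \veczero$ may fail if $\vecx$ is already a valid target, but in that case the statement is trivial with $\vecy' = \vecx$). Apply \autoref{lem:graver_decomposition} to write $\vecz = \sum_i \vecg^i$ as a sign-compatible sum of Graver basis elements $\vecg^i \in \calG(A)$. The idea is that a \emph{single} summand $\vecg^1$ already suffices to ``repair'' the pseudo-cluster: I would argue that at least one $\vecg^i$ has a negative entry in the $\vecchat$ coordinate, and pick that one to be $\vecg^1$. Indeed, $z_\vecchat = x_\vecchat - y_\vecchat = 1 - 0 = 1 > 0$, so by sign-compatibility every $\vecg^i$ has $g^i_\vecchat \geq 0$, and their sum is $1$, so exactly one of them, say $\vecg^1$, has $g^1_\vecchat = 1$ and all the others have $g^i_\vecchat = 0$.

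Then I would set $\vecy' = \vecx - \vecg^1$ and verify it is a valid target vector. The condition $A\vecy' = A\vecx - A\vecg^1 = \vecu - \veczero = \vecu$ is immediate since $\vecg^1 \in \calL^*(A)$. The condition $y'_\vecchat = x_\vecchat - g^1_\vecchat = 1 - 1 = 0$ holds by the choice of $\vecg^1$. The remaining condition is nonnegativity $\vecy' \geq \veczero$, equivalently $\vecg^1 \sqsubseteq \vecx$ in every coordinate where $g^1_\vecc > 0$ — this is where sign-compatibility of the decomposition does the work: wherever $g^1_\vecc > 0$ we have $z_\vecc \geq 0$ (sign-compatibility with $\vecz$, since all summands share signs) and in fact $g^1_\vecc \leq z_\vecc$ (because the other summands $\vecg^i$ contribute nonnegatively in that coordinate, again by sign-compatibility), so $g^1_\vecc \leq z_\vecc = x_\vecc - y_\vecc \leq x_\vecc$ using $\vecy \geq \veczero$. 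Hence $y'_\vecc = x_\vecc - g^1_\vecc \geq 0$. Finally $\vecx - \vecy' = \vecg^1 \in \calG(A)$ as required.

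The main subtlety — and the step I would be most careful about — is the degenerate case where $\vecx$ is itself a valid target vector, i.e.\ $\vecz = \veczero$, so that \autoref{lem:graver_decomposition} does not apply (the Graver basis lives in $\calL^*(A)$, which excludes $\veczero$). But $\vecz = \veczero$ would force $x_\vecchat = y_\vecchat$, contradicting $x_\vecchat = 1 \neq 0 = y_\vecchat$; so in fact $\vecz \neq \veczero$ always, and the decomposition is always available with at least one summand. A second point to state cleanly is exactly why ``the other summands contribute nonnegatively'': this is precisely the definition of a sign-compatible sum, where all $\vecg^i$ are pairwise sign-compatible and sign-compatible with $\vecz$, so in any fixed coordinate all the $g^i_\vecc$ have one common sign, hence partial sums never overshoot $z_\vecc$ in absolute value. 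With that observation in place, the rest is the short verification above, and the bound $\|\vecx - \vecy'\|_1 = \|\vecg^1\|_1$ is then controlled by a bound on Graver basis element sizes for $A$ (to be supplied afterwards, via the determinant/Cramer-type estimate referenced in the introduction, giving $2^{O(k)}$).
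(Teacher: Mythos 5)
Your proposal is correct and follows essentially the same route as the paper's proof: form $\vecz = \vecx - \vecy \in \calL^*(A)$, decompose it sign-compatibly into Graver elements, pick the unique summand with $\vecchat$-coordinate equal to $1$, and verify validity of $\vecy' = \vecx - \vecg^1$ using sign-compatibility (your observation that $\vecz \neq \veczero$ because $z_\vecchat = 1$ is a nice explicit touch). The only blemish is the phrase ``at least one $\vecg^i$ has a \emph{negative} entry in the $\vecchat$ coordinate,'' which contradicts and is immediately corrected by your own computation showing that entry is $+1$.
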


\begin{proof}
By \eqref{eq:u_def_matrix} and the lemma assumption, $A \vecx = A \vecy$. 
Let $\vecz = \vecx - \vecy$. Then, $z_\vecchat = x_\vecchat - y_\vecchat = 1$ and 
$A \vecz = \veczero$, and thus $\vecz \in \calL^*(A)$.

Using \autoref{lem:graver_decomposition}, we may express $\vecz$ as 
$\vecz = \sum_i \mathbf{g}^i$, where
$\vecg^i \in \calG(A) \subseteq \calL^*(A)$ for all $i$
and all $\vecg^i$ are sign-compatible with $\vecz$. 
As $z_\vecchat = 1$, the sign-compatibility means that 
there exists $\vecg^j$ appearing in the sum with $g^j_\vecchat = 1$.
We set $\vecw = \vecg^j$.

Let $\vecy' = \vecx - \vecw$. Clearly 
$\vecx - \vecy' = \vecw \in \calG(A)$. It remains to show that 
$\vecy'$ is a~valid target vector.
We have $A \vecy' = A \vecx - A \vecw = \vecu - \veczero = \vecu$ and 
$y'_\vecchat = x_\vecchat - w_\vecchat = 1 - 1 = 0$. 
To show that $\vecy' \geq \veczero$, we consider two cases. If $z_\vecc \geq 0$, then by sign-compatibility 
$0 \leq w_\vecc \leq z_\vecc$, and thus $y'_\vecc = x_\vecc - w_\vecc
\geq x_\vecc - z_\vecc = y_\vecc \geq 0$. On the other hand, if $z_\vecc < 0$,
then again by sign-compatibility, $w_\vecc \leq 0$, and thus $y'_\vecc =
x_\vecc - w_\vecc \geq x_\vecc \geq 0$.
\qed
\end{proof}

To complete our argument, it remains to bound $\|\vecg\|_1$, where 
$\vecg$ is an arbitrary element of~$\calG(A)$. 
We start with a known bound for $\ell^\infty$-norm of any element of the Graver basis.

\begin{lemma}[Lemma 3.20 of~\cite{Onn10}]
\label{clm:onn-Lemma3.20}
Let $q$ be the number of columns of integer matrix $M$. Let $\Delta(M)$ denote
the maximum absolute value of the determinant of a square sub-matrix of $M$. Then 
$\| \mathbf{g} \|_\infty \le q \cdot \Delta(M)$ for any $\mathbf{g}\in \calG(M)$.
\end{lemma}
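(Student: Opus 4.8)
The plan is to recognize a Graver element as an \emph{irreducible} (Hilbert-basis) element of a pointed rational cone and then to combine the classical determinant estimate for extreme rays with a Carathéodory decomposition, using a rounding step to exploit irreducibility. First I would reduce to the nonnegative orthant: fix the sign pattern $\sigma\in\{-1,+1\}^q$ of $\mathbf{g}$ (put $+1$ on zero coordinates) and replace $M$ by $MD$ with $D=\mathrm{diag}(\sigma)$. Since $D$ is a diagonal $\pm1$ matrix it keeps the number of columns, keeps the absolute value of every subdeterminant, and is a $\sqsubseteq$-order isomorphism of $\Z^q$; moreover $D\mathbf{g}\in\calG(MD)$ and $D\mathbf{g}\ge\veczero$, so it suffices to treat $\mathbf{g}\ge\veczero$. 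Unfolding the definition of $\calG$, the $\sqsubseteq$-minimality of $\mathbf{g}$ then says exactly that $\mathbf{g}$ cannot be written as $\mathbf{h}_1+\mathbf{h}_2$ with $\mathbf{h}_1,\mathbf{h}_2\in\Z^q$, $\mathbf{h}_i\ge\veczero$, $\mathbf{h}_i\ne\veczero$, $M\mathbf{h}_i=\veczero$ (otherwise $\veczero\ne\mathbf{h}_1\sqsubseteq\mathbf{g}$, $\mathbf{h}_1\ne\mathbf{g}$, $M\mathbf{h}_1=\veczero$ would contradict minimality); equivalently $\mathbf{g}$ is a minimal nonzero element of the monoid $K\cap\Z^q$, where $K:=\{\mathbf{x}\in\mathbb{R}^q:\mathbf{x}\ge\veczero,\ M\mathbf{x}=\veczero\}$ is a pointed rational cone (pointed because it lies in the nonnegative orthant).

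The argument then rests on two ingredients. The first is the standard estimate that every primitive integer vector $\mathbf{r}$ on an extreme ray of $K$ satisfies $\|\mathbf{r}\|_\infty\le\Delta(M)$: the minimal face of $K$ containing $\mathbf{r}$ is cut out by $M\mathbf{x}=\veczero$ together with $x_i=0$ for $i\notin S:=\mathrm{supp}(\mathbf{r})$, and being one-dimensional it forces the columns of $M$ on $S$ to have rank $|S|-1$; choosing a full-rank $(|S|-1)\times|S|$ submatrix and taking its cofactor kernel vector (a Cramer-type formula) gives a generator of this line whose entries are $(|S|-1)$-minors of $M$, hence of absolute value at most $\Delta(M)$, and the primitive $\mathbf{r}$ divides it coordinatewise. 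The second ingredient is Carathéodory's theorem for cones: since $K$ is generated by the primitive integer vectors on its extreme rays, I can write $\mathbf{g}=\sum_{j\in J}\lambda_j\mathbf{r}^{(j)}$ with each $\lambda_j>0$, each $\mathbf{r}^{(j)}$ a primitive extreme-ray generator, and $(\mathbf{r}^{(j)})_{j\in J}$ linearly independent, so $|J|\le q$.

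To combine these: if some $\lambda_j\ge1$, then $\mathbf{g}-\mathbf{r}^{(j)}$ is an integer vector in $K$ with $\mathbf{g}-\mathbf{r}^{(j)}\ge\veczero$ (all summands are nonnegative and $\lambda_j\ge1$); if it is nonzero this is a forbidden split $\mathbf{g}=\mathbf{r}^{(j)}+(\mathbf{g}-\mathbf{r}^{(j)})$, contradicting irreducibility, and if it is zero then $\mathbf{g}=\mathbf{r}^{(j)}$ and $\|\mathbf{g}\|_\infty\le\Delta(M)\le q\cdot\Delta(M)$. So we may assume $\lambda_j<1$ for all $j\in J$, and then for every coordinate $i$, using $\mathbf{r}^{(j)}\ge\veczero$, $g_i=\sum_{j\in J}\lambda_j r^{(j)}_i\le\sum_{j\in J}r^{(j)}_i\le|J|\cdot\Delta(M)\le q\cdot\Delta(M)$, whence $\|\mathbf{g}\|_\infty\le q\cdot\Delta(M)$. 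The step I expect to be the main obstacle is making this rounding argument watertight: one must check that peeling off a single extreme-ray generator leaves the remainder inside $K\cap\Z^q$ and nonnegative so that ``$\lambda_j\ge1$'' genuinely contradicts minimality, and that the bound $|J|\le q$ is read off the \emph{same} Carathéodory decomposition used for the rounding; the extreme-ray estimate is routine but must be phrased so the minors that appear are honestly subdeterminants of $M$ (deleting the columns of forced-zero coordinates and selecting a maximal independent set of rows). One should also adopt the convention $\Delta(M)=1$ when $M$ is the zero matrix (the empty minor), so that the inequality still holds in that degenerate case.
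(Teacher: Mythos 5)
Your proof is correct. The paper gives no proof of this lemma --- it is quoted verbatim as Lemma~3.20 of Onn's book~\cite{Onn10} --- and your argument (flip signs to reduce to the orthant of $\mathbf{g}$, write $\mathbf{g}$ by conic Carath\'eodory as a nonnegative combination of at most $q$ linearly independent primitive extreme-ray generators of the pointed cone $\{\mathbf{x}\geq\veczero : M\mathbf{x}=\veczero\}$, bound each generator's entries by $\Delta(M)$ via the Cramer/cofactor kernel vector, and use $\sqsubseteq$-minimality to rule out any coefficient $\geq 1$ by peeling off that generator) is essentially the standard textbook proof of this bound, so there is nothing to compare beyond noting that your write-up, including the degenerate-case convention for the zero matrix, is sound.
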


\begin{lemma}
\label{lem:graver_norm_bound}
For any $\vecg \in \calG(A)$, it holds that $\|\vecg\|_1 \leq 2^{O(k)}$.
\end{lemma}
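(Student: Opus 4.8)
The plan is to apply \autoref{clm:onn-Lemma3.20} to the matrix $A$ and then carefully estimate the two quantities appearing in the bound: the number of columns $q = |\configsext|$ and the determinant bound $\Delta(A)$. For the first quantity, recall that $|\calC| = \pi(k) \le 2^{O(\sqrt{k})}$, so $q = |\configsext| = \pi(k) + 1 = 2^{O(\sqrt{k})}$, which is negligible compared to the target $2^{O(k)}$. Thus the entire burden of the proof rests on showing $\Delta(A) \le 2^{O(k)}$.

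To bound $\Delta(A)$, I would consider an arbitrary square sub-matrix $B$ of $A$; since $A$ has only $k$ rows, $B$ has side length at most $k$. Each entry of $A$ is a coordinate $c_i$ of some configuration vector $\vecc \in \configsext$, so each entry is a nonnegative integer bounded by $\elem(\vecc)/1 \le 2k$ (in fact the $i$-th coordinate is at most $2k/i$, but the crude bound $2k$ suffices). The first instinct, Hadamard's inequality, gives $|\det B| \le \prod_{\text{rows}} \|\text{row}\|_2 \le (2k \cdot \sqrt{k})^{k} = (2k^{3/2})^k = 2^{O(k \log k)}$, which is \emph{not} good enough --- it is superexponential in $k$. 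So the main obstacle is shaving the extra $\log k$ factor: I need a determinant bound that is genuinely $2^{O(k)}$, i.e. single-exponential in $k$.

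The key observation to get the sharper bound is that the columns of $A$ are not arbitrary vectors with entries up to $2k$: each column $\vecc$ satisfies $\sum_{i=1}^k i \cdot c_i = \elem(\vecc) \in \{k, 2k\}$, so in particular $\sum_i c_i \le 2k$ and, more importantly, the columns are quite sparse and their entries decay with the index. I would use a weighted/scaled Hadamard argument: multiply the $i$-th row of $B$ by $i$ and divide column $j$ by $\elem(\vecc^{(j)}) \le 2k$; the scaled matrix $B'$ then has the property that each column has $\ell_1$-norm at most $1$ (since $\sum_i i \cdot c_i / \elem(\vecc) = 1$) and nonnegative entries, hence each column has $\ell_2$-norm at most $1$, so $|\det B'| \le 1$ by Hadamard applied column-wise. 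Undoing the scaling gives $|\det B| = |\det B'| \cdot \prod_{j} \elem(\vecc^{(j)}) / \prod_{i} i \le (2k)^k / k! $. By Stirling, $k! \ge (k/\e)^k$, so $(2k)^k / k! \le (2\e)^k = 2^{O(k)}$, as desired. (If a sub-matrix omits some rows or columns, the same argument applies verbatim to the relevant index sets, only making the bound smaller.)

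Combining the two estimates via \autoref{clm:onn-Lemma3.20}, for any $\vecg \in \calG(A)$ we get $\|\vecg\|_\infty \le q \cdot \Delta(A) \le 2^{O(\sqrt{k})} \cdot 2^{O(k)} = 2^{O(k)}$. Finally, since $\vecg$ has $q = 2^{O(\sqrt{k})}$ coordinates, $\|\vecg\|_1 \le q \cdot \|\vecg\|_\infty \le 2^{O(\sqrt{k})} \cdot 2^{O(k)} = 2^{O(k)}$, which completes the proof. I would double-check that the paper does not need a more careful treatment of which square sub-matrices arise (e.g. whether $\vecchat$'s column, with $\elem = 2k$, is the worst case --- it is, and it is already accounted for by the bound $\elem(\vecc) \le 2k$ used above), but the scaling argument handles all columns uniformly, so no case analysis is required.
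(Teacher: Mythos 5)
Your proposal is correct and takes essentially the same route as the paper: both multiply row $i$ by $i$, apply Hadamard's inequality to the rescaled sub-matrix using the column-sum bound $\elem(\vecc)\le 2k$, divide out the factor $j!$ coming from the row scaling to get $\Delta(A)\le 2^{O(k)}$, and finish via \autoref{clm:onn-Lemma3.20} together with $|\configsext|\le 2^{O(\sqrt{k})}$. The only (cosmetic) differences are that you normalize each column by $\elem(\vecc)$ rather than bounding column $\ell_1$-norms directly, and you use the uniform bound $2k$ for every column instead of noting that only the $\vecchat$-column sums to $2k$, which costs a harmless factor $2^k$.
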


\begin{proof}
We start by showing that $\Delta(A) \leq \e^k$. Let $\tilde{A}$ be the matrix
$A$ with each row multiplied by its index. By the definition of a configuration,
the column sums of $\tilde{A}$ are equal to $k$, with the exception of the
column corresponding to the configuration~$\vecchat$ whose sum is equal to $2k$.
As all entries of $\tilde{A}$ are non-negative, the same holds for
$\ell_1$-norms of columns of $\tilde{A}$.

Fix any square sub-matrix $B$ of $A$ and let $j$ be the number of its columns
(rows). Let $\tilde{B}$ be the corresponding sub-matrix of $\tilde{A}$. Since
$\tilde{B}$ is obtained from $B$ by multiplying its rows by $j$ distinct
positive integers, $|\det(\tilde{B})|\geq j!\cdot|\det(B)|$. (This relation
holds with equality if and only if $B$ contains (a part of)
the first $j$ rows of $A$.) 

It therefore remains to upper-bound $|\det(\tilde{B})|$. Hadamard's bound on
determinant states that the absolute value of a determinant is at most the
product of lengths ($\ell_2$-norms) of its column vectors, which are in turn
bounded by the $\ell_1$-norms of columns of $\tilde{B}$. These are not greater
than $\ell_1$-norms of the corresponding columns of $\tilde{A}$ and thus
$|\det(\tilde{B})| \leq 2 \cdot k^j$.

Combining the above bounds and using $j \leq k$, we obtain
\[
    |\det(B)|\leq\frac{|\det(\tilde{B})|}{j!}
    \leq \frac{2\cdot k^j}{j!}\leq \frac{2\cdot k^k}{k!}\leq 2 \cdot \e^{k-1}\leq \e^k\,.
\]

As $B$ was chosen as an arbitrary square sub-matrix of $A$, $\Delta(A) \leq \mathrm{e}^k$.
Our matrix $A$ has $|\configsext|$ columns, and hence
\autoref{clm:onn-Lemma3.20} implies that $\| \vecw \|_\infty\le |\configsext| \cdot 2^{O(k)}$.
As $\vecw$ is $|\configsext|$-dimensional, 
$\| \vecw \|_1 \leq |\configsext| \cdot \| \vecw \|_\infty 
\leq |\configsext|^2 \cdot \e^k$. Finally using
$|\configsext| \leq 2^{O(\sqrt{k})}$, we obtain 
$\| \vecw \|_1 \leq 2^{O(k)}$.
\qed
\end{proof}

\begin{corollary}
\label{cor:final}
The remapping event of \REORG affects at most $2^{O(k)}$ clusters.
\end{corollary}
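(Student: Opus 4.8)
The plan is to chain together the lemmas established above, carrying the abstract quantities back to a concrete node remapping. First, \autoref{obs:solution_exists} guarantees that the remapping event under consideration admits at least one valid target vector $\vecy$. Applying \autoref{lem:better_solution} to this $\vecy$ produces another valid target vector $\vecy'$ with the additional structural property that $\vecx - \vecy' \in \calG(A)$; that is, the correction needed to pass from the current clustering to $\vecy'$ is a single Graver basis element of the matrix $A$.

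Next I would bound the size of this correction. Setting $\vecg = \vecx - \vecy'$ and invoking \autoref{lem:graver_norm_bound} yields $\|\vecx - \vecy'\|_1 \le 2^{O(k)}$. Feeding the valid target vector $\vecy'$ into \autoref{lem:solution_cost} then produces an explicit node remapping realizing the transition from $\vecx$ to $\vecy'$ while affecting only $(1/2) \cdot \|\vecx - \vecy'\|_1 + 1/2 \le 2^{O(k)}$ clusters.

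To finish, I would recall that at a remapping event \REORG does not use an arbitrary valid mapping but one minimizing the number of affected clusters. The remapping exhibited in the previous step is a legitimate candidate, since $\vecy'$ is a nonnegative integer vector with $y'_\vecchat = 0$ and $A \vecy' = \vecu$, and hence corresponds to an honest assignment of all components (including $S_{ab}$) to clusters of size $k$ respecting the component invariant. Consequently the mapping that \REORG actually selects affects at most $2^{O(k)}$ clusters, establishing \autoref{cor:final}.

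I do not expect a genuine obstacle here: the technical heart of the argument already lies in \autoref{lem:better_solution} and \autoref{lem:graver_norm_bound}. The only point requiring mild care is the bookkeeping that turns a valid target vector into a concrete node-to-cluster remapping, together with the observation that the minimizing choice made by \REORG can only be at least as good as the candidate we constructed.
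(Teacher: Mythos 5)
Your proposal is correct and follows essentially the same route as the paper: combine \autoref{obs:solution_exists} with \autoref{lem:better_solution} to get a valid target vector $\vecy'$ with $\vecx - \vecy' \in \calG(A)$, bound $\|\vecx - \vecy'\|_1 \leq 2^{O(k)}$ via \autoref{lem:graver_norm_bound}, and plug $\vecy'$ into \autoref{lem:solution_cost}. Your closing remark that \REORG's minimizing choice can only do at least as well as this candidate is left implicit in the paper but is exactly the intended reading.
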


\begin{proof}
Combining \autoref{obs:solution_exists} with \autoref{lem:better_solution}
yields the existence of a valid target vector~$\vecy'$ satisfying $\|\vecx -
\vecy'\|_1 \in \calG(A)$. By \autoref{lem:graver_norm_bound}, $\|\vecx -
\vecy'\|_1 \leq 2^{O(k)}$. Thus, plugging $\vecy'$ to
\autoref{lem:solution_cost} yields the corollary.
\qed
\end{proof}

\subsection{Competitive Ratio}

Combining \autoref{lem:remapping_to_ratio} with \autoref{cor:final} immediately 
yields the desired bound on the competitive ratio of \REORG.

\begin{theorem}
The variant of \REORG in which 
each remapping event is handled 
in a way minimizing the number of affected clusters 
is $(\ell \cdot 2^{O(k)})$-competitive.
\end{theorem}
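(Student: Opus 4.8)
The plan is to chain together the lemmas already developed in the paper. The final theorem is essentially a bookkeeping statement: given that each remapping event affects only $2^{O(k)}$ clusters, the competitive ratio bound follows from \autoref{lem:remapping_to_ratio}. So the proof is a two-line composition, and I expect no real obstacle beyond making sure the function $f(\ell,k)$ in \autoref{lem:remapping_to_ratio} is instantiated correctly.

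Concretely, first I would invoke \autoref{cor:final}, which states that in the proposed variant of \REORG (the one that, at each remapping event, chooses a valid node mapping minimizing the number of affected clusters), every remapping event affects at most $2^{O(k)}$ clusters. Thus we may take $f(\ell,k) = 2^{O(k)}$ in the statement of \autoref{lem:remapping_to_ratio}. Then I would apply \autoref{lem:remapping_to_ratio} directly: it gives that \REORG is $O(\ell \cdot k^2 \cdot f(\ell,k))$-competitive, which in our case is $O(\ell \cdot k^2 \cdot 2^{O(k)}) = \ell \cdot 2^{O(k)}$, since the polynomial factor $k^2$ is absorbed into the exponential $2^{O(k)}$.

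The only subtlety worth a sentence is that the competitive bound is asymptotic in the online input length, with the additive constant $\beta$ allowed to depend on $\ell$ and $k$ — this is exactly the regime handled in the last paragraph of the proof of \autoref{lem:remapping_to_ratio}, where the cost of \REORG in the final (possibly unterminated) phase is bounded by a function of $\ell$ and $k$ alone, while in every completed phase the \REORG-to-\OPT ratio is $O(\ell \cdot k^2 \cdot f(\ell,k))$. So nothing new needs to be argued here; the theorem is an immediate corollary. If I wanted to pad the proof I could restate the phase-by-phase accounting, but it is cleaner to simply point to the two referenced results.

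\begin{proof}
By \autoref{cor:final}, each remapping event of the considered variant of \REORG affects at most $2^{O(k)}$ clusters, i.e., \autoref{lem:remapping_to_ratio} applies with $f(\ell,k) = 2^{O(k)}$. Hence \REORG is $O(\ell \cdot k^2 \cdot 2^{O(k)})$-competitive. Absorbing the polynomial factor $k^2$ into the exponential term yields the claimed bound of $\ell \cdot 2^{O(k)}$.
\qed
\end{proof}
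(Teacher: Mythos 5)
Your proof is correct and matches the paper exactly: the theorem is obtained by combining \autoref{cor:final} with \autoref{lem:remapping_to_ratio} and absorbing the $k^2$ factor into the $2^{O(k)}$ term, which is precisely the paper's (one-line) argument.
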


\bibliographystyle{splncs04}
\bibliography{references}

\end{document}